\theoremstyle{plain}
\newtheorem{proposition}{Proposition}[section]
\theoremstyle{definition}
\theoremstyle{remark}
\begin{document}

\begin{frontmatter}
\title{Diffusion piecewise exponential models for survival extrapolation using Piecewise Deterministic Monte Carlo}
\runtitle{Diffusion piecewise exponential models}

\begin{aug}
\author[A]{\fnms{Luke}~\snm{Hardcastle}\ead[label=e1]{luke.hardcastle.20@ucl.ac.uk}},
\author[A]{\fnms{Samuel}~\snm{Livingstone}\ead[label=e2]{samuel.livingstone@ucl.ac.uk}}
\and
\author[A]{\fnms{Gianluca}~\snm{Baio}\ead[label=e3]{g.baio@ucl.ac.uk}}
\address[A]{Department of Statistical Science,
University College London\printead[presep={,\ }]{e1,e2,e3}}
\runauthor{L. Hardcastle, S. Livingstone, and G. Baio}
\end{aug}

\begin{abstract}
The piecewise exponential model is a flexible non-parametric approach for time-to-event data, but extrapolation beyond final observation times typically relies on random walk priors and deterministic knot locations, resulting in unrealistic long-term hazards. We introduce the diffusion piecewise exponential model, a prior framework consisting of a discretised diffusion for the hazard, that can encode a wide variety of information about the long-term behaviour of the hazard, time changed by a Poisson process prior for knot locations. This allows the behaviour of the hazard in the observation period to be combined with prior information to inform extrapolations. Efficient posterior sampling is achieved using Piecewise Deterministic Markov Processes, whereby we extend existing approaches using sticky dynamics from sampling spike-and-slab distributions to more general transdimensional posteriors. We focus on applications in Health Technology Assessment, where the need to compute mean survival requires hazard functions to be extrapolated beyond the observation period, showcasing performance on datasets for Colon cancer and Leukaemia patients.
\end{abstract}

\begin{keyword}[class=MSC]
\kwd[Primary ]{62F15}
\kwd{62N01}
\kwd[; secondary ]{65C05}
\end{keyword}

\begin{keyword}
\kwd{Survival Analysis}
\kwd{Stochastic Differential Equations}
\kwd{Piecewise Deterministic Markov Processes}
\kwd{Markov Chain Monte Carlo}
\end{keyword}

\end{frontmatter}

\section{Introduction}

Survival analysis is the study of the time $Y \in \mathbb{R}_{>0}$ until an event of interest occurs, where observations $\{y_i\}_{i=1}^n$ are typically subject to some form of censoring \citep{Ibrahim2001}. This censoring is said to be administrative if survival times are automatically censored at time $y_+$, conditional on $Y > y_+$. This phenomenon is often observed in data from clinical studies, with $y_+$ corresponding to the end point of a trial or observational study. In this paper we are concerned with modelling survival data under this censoring mechanism, when the inferential objective is to characterise the hazard function, $h(y)$, over an extended time horizon $(0,y_\infty)$ for $y_\infty >> y_+$. 

To illustrate the challenges caused by administrative censoring, and motivate the problem outlined above, we will focus throughout on the estimation of mean survival,
\begin{equation}
\label{eq;Mean_survival}
    \mathbb{E}[Y] = \int_0^{y_+}\exp\left(-\int_0^yh(u)du\right)dy +  \int_{y_+}^{y_\infty}\exp\left(-\int_0^yh(u)du\right)dy.
\end{equation}
Estimation of \eqref{eq;Mean_survival} often arises in the field of Health Technology Assessment (HTA) \citep{Baio2020,Latimer2011}. In brief, HTA provides a decision-theoretic framework for analysing the cost-effectiveness of novel medical interventions in publicly funded healthcare systems. In England, following the recommendations of the National Institute for Health and Care Excellence (NICE), expected life years (i.e mean survival) is commonly incorporated as the primary measure of benefit in these analyses. Since 2018, $56\%$ of NICE appraisals for cancer treatments have been conducted using immature survival data \citep{Gibbons2024}, where the majority of events occur in $(y_+,\infty)$. Estimating \eqref{eq;Mean_survival} is therefore an increasingly important applied problem, often undertaken within a Bayesian paradigm.

Equation \eqref{eq;Mean_survival} reveals the challenges associated with administrative censoring, as accurate estimation of mean survival is reliant on correctly modelling $h(y)$ during both the observation period, $y \in (0,y_+)$, \textit{and} during the extrapolation period, $y \in [y_+,y_\infty)$. Given a sufficiently flexible or well-specified model, the data will ensure accurate inference during the former. Inference during the latter, however, will inherently be driven by model and prior specification, becoming increasingly sensitive to these choices as the rate of administrative censoring increases (up to $\approx 90\%$ as in the data in Section \ref{sec;TimeVarying}). These concerns have lead to a rise in the use of Bayesian methods for modelling \eqref{eq;Mean_survival} due to the possibility of incorporating explicit prior information to guide extrapolation \citep{Baio2020, Rutherford2020}.

Perhaps the most common approach is to assume $Y$ is generated by a (typically two-parameter) parametric survival distribution, where the parametric form allows hazards to be extrapolated beyond $y_+$ \citep{Latimer2011}. Prior information can be incorporated to inform the values of certain parameters \citep{Palmer2023}; however, this approach will typically underestimate uncertainty in the extrapolation period due to the untestable assumption that the functional form of the hazard has been correctly specified. 

More recent works have focused on flexible Bayesian survival models to overcome this issue \citep{Rutherford2020}, including spline-based approaches \citep{Jackson2023}, polyhazard models \citep{Demiris2015, Hardcastle2024} and piecewise models \citep{Cooney2023a, Kearns2019}. In each of these cases, extrapolations are either driven by behaviour of the hazard inferred during the observation period, external data included in the model, or both. Notably, the goal of using explicit prior information to guide extrapolations is rarely realised. A recent promising development in this direction is the incorporation of prior information into spline models using the Sheffield elicitation framework \citep{Jackson2023, Oakley2025}. This information, however, must be encoded as a synthetic dataset, which limits the type of prior information that can inform the model. Further, extrapolations are highly sensitive to the underlying modelling assumptions, e.g.~the placement of knots beyond $y_+$.

In this work we seek to develop a model and corresponding prior structure that allows for principled inference of $h(y)$ in both the observation and extrapolation period. We require a sufficiently flexible model during the observation period, while inferences during the extrapolation period should primarily be driven by explicit prior information. 

There are two primary considerations for specifying this prior information: \textit{i)} Assumptions about the form of this prior information should be minimal allowing the analyst maximal flexibility in its specification \citep{Mikkola2024}. \textit{ii)} The prior should be at least moderately informative during the extrapolation period. We argue, given the often sparse nature of data in these applications, that specification of an informative prior is the \textit{only} way to ensure sensible inference in the extrapolation period.

\subsection{Our contributions}
\label{sec;Contributions}
We introduce the Diffusion Piecewise Exponential Model. The piecewise exponential model is defined by a piecewise constant log-hazard function,
\begin{equation}
    \label{eq;PEM_haz}
    \log h(y) = \sum_{j=1}^J\alpha_j\mathbbm{1}\left(y \in (s_{j-1},s_j]\right),
\end{equation}
where $\{\alpha_j\}_{j=1}^J$ are a sequence of local log-hazards, and $\{s_j\}_{j=0}^J$ are a sequence of knot locations with $s_0 = 0$. Explicitly, our contributions are as follows.

In Section \ref{sec;Model}, we introduce a novel prior formulation for the sequences $\{\alpha_j\}_{j=1}^J$ and $\{s_j\}_{j=0}^J$, allowing for the principled combination of inferences for the observation period, primarily driven by the data, and inferences for the extrapolation period, primarily driven by prior information. This prior for $\{\alpha_j\}_{j=1}^J$ is given by the discretisation of a diffusion, with drift function used to encode strong prior information about the long-term behaviour of the hazard function. Notably, restrictions on the form of the drift are minimal allowing for a range of prior information to be encoded into the model. The prior for $\{s_j\}_{j=0}^J$ is given by a Poisson point process. This acts as a time change between the underlying diffusion and $\{\alpha_j\}_{j=1}^J$ allowing for intensity in the changes of the hazard during the extrapolation period to be informed by those observed on $(0,y_+)$.

In Section \ref{sec;Sampling}, we introduce a novel Markov Chain Monte Carlo (MCMC) sampling algorithm based on Piecewise Deterministic Markov Processes (PDMPs). In particular we make use of recent developments in defining and generating these processes to design an efficient sampler that requires minimal user tuning \citep{Bertazzi2023, Michel2020}. Further, to handle the transdimensional posterior resulting from the prior on $\{s_j\}_{j=0}^J$, we extend recent results that use PDMPs to sample from posteriors induced by spike and slab priors \citep{Bierkens2023a, Chevallier2023} to more general transdimensional posteriors. 

In Section \ref{sec;Examples} we demonstrate the flexibility of the model and prior structure, and provide practical guidelines for its use via case studies corresponding to two clinical data sets. We conclude with a discussion in Section \ref{sec;Discussion}.

\section{The Diffusion Piecewise Exponential Model}
\label{sec;Model}
Throughout we assume that we observe data, $\mathcal{D} = \{y_i, \delta_i, w_i\}_{i=1}^n$, consisting of $n$ independent survival times $y_i$ and covariate vectors $w_i \in \mathbb{R}^p$. These observations either correspond to the time of the event, if $\delta_i = 1$, or right censored observations such that $y_i < Y_i$, with $\delta_i = 0$. Additionally we assume that the censoring mechanism is non-informative, and that there exists an administrative censoring time $y_+$ after which all observations are censored. This set up is ubiquitous in HTA practice, as well as in many other applied fields.

Survival data are typically analysed through the hazard function, $h(y)$, and the survival function, $S(y)$, connected through the likelihood
\begin{equation*}
    \mathcal{L}(\cdot\mid\mathcal{D}) = \prod_{i=1}^nh(y_i)^{\delta_i}S(y_i) = \prod_{i=1}^nh(y_i)^{\delta_i}\exp\left(-\int_0^{y_i}h(u)du\right).
\end{equation*}
Specification of the hazard function is therefore sufficient to specify the full likelihood model.

\subsection{Piecewise exponential models}
\label{sec;PEM}
Piecewise exponential models \citep{Feigl1965, Ibrahim2001} are constructed via a piecewise constant log-hazard function \eqref{eq;PEM_haz}. Covariates can be incorporated into \eqref{eq;PEM_haz} by replacing $\alpha_j$ with $\eta_{ij} = \alpha_j + w_i^\top\beta_j$. We refer to $\alpha_j$ as the local baseline log-hazard and $\beta_j \in \mathbb{R}^p$ as a vector of local covariate effects, which can encode a local proportional hazards assumption.

To complete the model specification we require priors for $\{\alpha_j, \beta_j, s_j\}$. Computational convenience is a common motivation for prior selection, primarily through the use of independent, conjugate Gamma priors on $\exp(\alpha_j)$.  Another common objective is some degree of smoothing between local hazards, by using either a random-walk prior on $\alpha_j$ \citep{Fahrmeir2001}, Markov-Poisson-Gamma priors \citep{Lin2021} or priors incorporating local and global trend terms \citep{Kearns2019}.   The independent priors are computationally convenient, but do not facilitate borrowing of information between segments. This limits the number of intervals that can be specified while leaving the model identifiable, resulting in a less expressive hazard function. Further, in the context of extrapolation, $\alpha_J$ is taken to be the hazard for the entire extrapolation period \citep{Cooney2023a}. This approach has been more broadly advocated for by \cite{Bagust2014} but has proved controversial due to the reliance on the model selected for the observation period, and the inability to inform long-term hazards with prior information \citep{Latimer2014}.  Smoothing priors borrow information between segments allowing for a more expressive hazard function in the observation period. The choice of prior for the observation period, however, is closely linked to the model for extrapolation, and either restricts the type or amount of prior information that can be incorporated or requires assumptions on the evolution of the hazard function in the observation period \citep{Kearns2022}. The prior we introduce in the following section will contain most of these prior structures as special cases, while providing a weakly informative prior during the observation period.

\subsection{Discretised Diffusion Priors}
\label{sec;diff_prior}
To capture prior knowledge about the long-term behaviour of the hazard we assume that the \textit{discrete}-time log-hazard process $\{\alpha_j\}_{j=1}^J$ can be described via a \textit{continuous}-time stochastic process $(\check{\alpha}_{\check{y}})_{\check{y}\geq 0}$ with dynamics governed by the stochastic differential equation
\begin{equation}
    \label{eq;SDE}
     \text{d}\check{\alpha}_{\check{y}} = \mu(\check{\alpha}_{\check{y}} )\text{d}\check{y} + \text{d}W_{\check{y}}, \quad \check{\alpha}_0 = a_0,
\end{equation}
with drift $\mu(\check{\alpha}_{\check{y}})$, where $(W_{\check{y}})_{\check{y}\geq 0}$ is a standard Brownian motion \citep{Oksendal2013}. The random variables $\alpha_1,...,\alpha_J$ are then defined through the relation $\alpha_j := \check{\alpha}_{j\sigma^2}$, where $\sigma^2$ is a step size defined later in this section. The primary motivation behind this prior is that information about the evolution of the hazard can be encoded into $\mu(\check{\alpha}_{\check{y}})$. During the observation period, where data are more abundant, this acts as a weakly informative prior with limited impact on the resulting inference. However, as observations become sparser and the hazard is extrapolated beyond $y_+$,  this prior naturally becomes increasingly informative, allowing for long-term inferences to be driven by expert opinion encoded through $\mu(\check{\alpha}_{\check{y}})$.

Previous works have utilised diffusions as priors for hazard functions, including \cite{Aalen2004} in which the hazard function is modelled as a squared Ornstein-Uhlenbeck process and \cite{Roberts2010} in which $\mu(\check{\alpha}_{\check{y}})$ is defined such that the resulting diffusion is a stochastic perturbation around a pre-specified hazard function. The challenges of working directly with diffusions are primarily computational. Diffusions of interest rarely have tractable solutions, and therefore need to be finely discretised, increasing computational cost. To combat this, our approach involves a hierarchical formulation in which the numerical discretisation is dictated by the knot locations $\{s_j\}_{j=1}^J$, which in turn are sampled from an underlying process, and a prior on the discretisation step-size. This allows for more parsimonious and computationally convenient hazard functions to be specified.  More details are given in Section \ref{sec;knot_prior}.

\subsubsection{Example choices of $\mu(\check{\alpha}_{\check{y}})$}

We briefly outline some example choices for $\mu(\check{\alpha}_{\check{y}})$, with their behaviour illustrated in Figure \ref{fig:Diffusions}. A first trivial example is to set $\mu(\check{\alpha}_{\check{y}}) = 0$. The underlying diffusion is then a Brownian motion and the discretised version recovers the random walk prior \citep{Fahrmeir2001}. In practice this corresponds to having no expert opinion about the long-term behaviour of the hazard, with credible intervals for the log-hazard increasing in width at a constant rate as $y \to y_\infty$. This assumption will often contradict available prior information, however, and can be improved upon in the following examples.

\textbf{Stationary distributions:} There will often be prior information available about a range of plausible values for the hazard function in the extrapolation period. In our framework this is encoded as a Langevin diffusion, such that $\mu(\check{\alpha}_{\check{y}}) = \nabla\log f_\psi(\check{\alpha}_{\check{y}})/2$, where $f_\psi(\check{\alpha}_{\check{y}})$ is the density of the required stationary distribution for the log-hazard, with parameters $\psi$. We consider log-Normal and Gamma (equivalently Normal and log-Gamma) stationary distributions for the hazard function (equivalently log-hazard function). The required drifts are then given by
\begin{equation}
    \label{eq;Langevin}
    \mu_{LN}(\check{\alpha}_{\check{y}}) = \frac{1}{\psi_2}(\check{\alpha}_{\check{y}} - \psi_1), \quad \mu_G(\check{\alpha}_{\check{y}}) = \psi_1 - \psi_2\exp(\check{\alpha}_{\check{y}}).
\end{equation}

\textbf{Underlying hazards:} The stochastic perturbation approach introduced by \cite{Roberts2010} can also be incorporated into our framework. In short we suppose that we have access to a known hazard function $h_0(y)$ that quantifies our belief about how the hazard function evolves in the extrapolation period derived, for example, from data from previous clinical trials. A suitable drift function can then be derived by viewing $h_0(y)$ as the solution to an autonomous ordinary differential equation,
\begin{equation*}
    \frac{dh_0(y)}{dy} = g(h_0(y)), \quad \mu_0(\check{\alpha}_{\check{y}}) = g(\check{\alpha}_{\check{y}}).
\end{equation*}
In \cite{Roberts2010}, the absolute value of the diffusion is used to map the diffusion from $\mathbb{R}$ to $\mathbb{R}_{>0}$. In our case $g(\check{\alpha}_{\check{y}})$ requires a final change of variables to be transformed to a drift for the log-hazard.  As a running example, we consider the case where $h_0(y)$ corresponds to a Gompertz hazard function. This is a natural choice as the Gompertz distribution is often used to model long-term survival in the general population \citep{Thatcher1999}, and therefore intuitively should provide sensible inferences for the extrapolation period. In our framework this ensures estimates of mean survival in the population of interest are consistent with those seen in the (usually healthier) general population. The corresponding stochastic differential equation has a linear drift
\begin{equation}
    \label{eq;Gompertz}
    \mu(\check{\alpha}_{\check{y}}) = \psi,
\end{equation}
where $\psi$  is the scale parameter of the required Gompertz distribution. The derivation of this quantity is provided in the supplementary material \citep{Hardcastle2024supp}.

\textbf{Time-varying drifts:} The above examples have utilised time-homogeneous drift functions. This is, however, not a necessary requirement. In particular, expert opinion on the evolution of the hazard function will often evolve with time. A more flexible class of diffusions can therefore be defined with time-varying drifts $\mu(\check{\alpha}_{\check{y}},y)$. We investigate this possibility further in Section \ref{sec;TimeVarying}.

\begin{figure}
    \centering
    \includegraphics[width=\linewidth]{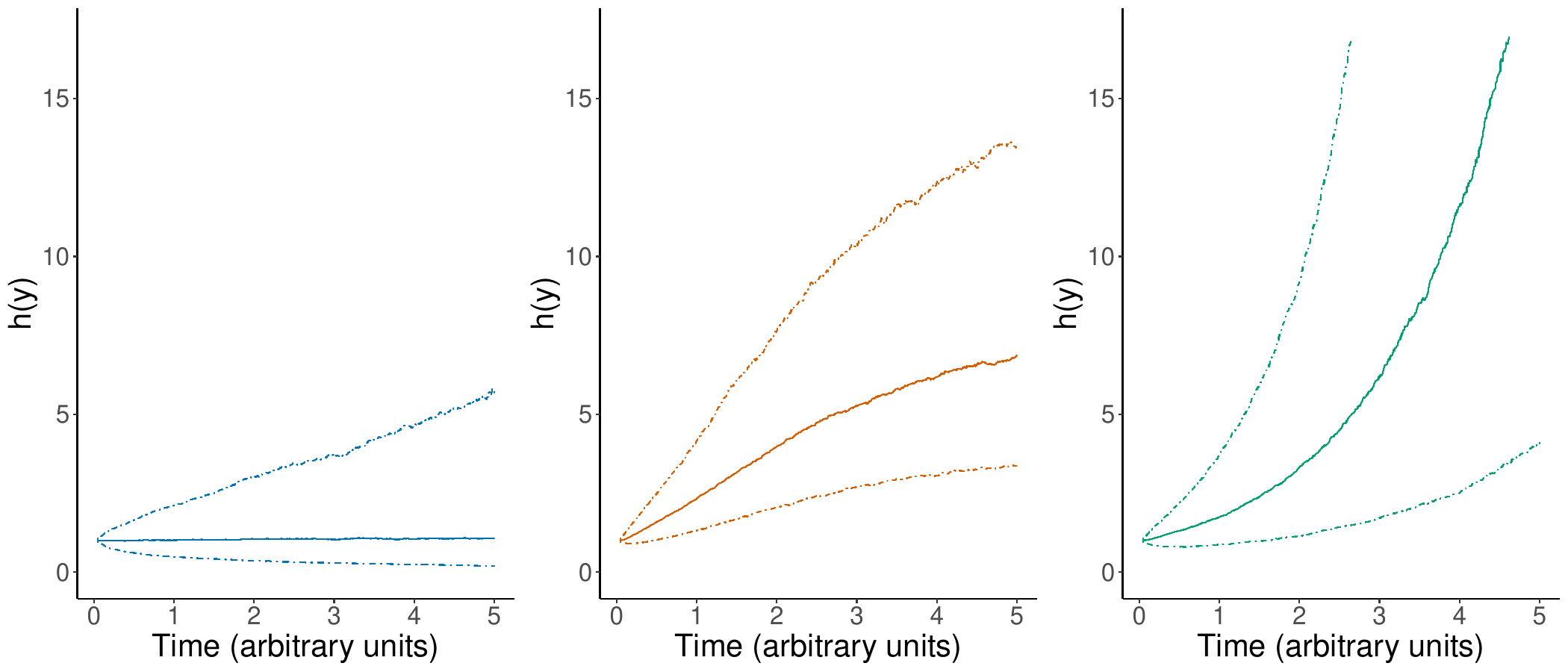}
    \caption{Prior simulations for $h(y)$ under different specifications for $\mu(\check{\alpha}_{\check{y}})$. (Left) Random Walk prior $\mu(\check{\alpha}_{\check{y}}) = 0$. (Centre) Gaussian Langevin prior \eqref{eq;Langevin}. (Right) Gompertz prior dynamics (log-linear drift) \eqref{eq;Gompertz}.}
    \label{fig:Diffusions}
\end{figure}

\subsubsection{Discretisation}

As noted previously, stochastic differential equations rarely have analytic solutions and therefore implementation requires \eqref{eq;SDE} to be discretised. The standard approach is the Euler-Maruyama discretisation \citep{Platen2010}
\begin{equation}
    \label{eq;EM}
    \check{\alpha}_{(j+1)\sigma^2} = \check{\alpha}_{j\sigma^2} + \theta_{j+1}, \quad \theta_{j+1} \sim \text{Normal}(\sigma^2\mu(\check{\alpha}_{j\sigma^2}),\sigma^2).
\end{equation}
where $\sigma^2$ is the step size, for $1 \leq j \leq J$. Note the slight abuse of notation, with $\{\check{\alpha}_{j\sigma^2}\}_{j=1}^J$ now used to denote the discretised version of $(\check{\alpha}_{\check{y}})_{{\check{y}}\geq0}$. It is well established that \eqref{eq;EM} can be numerically unstable when $\mu(\check{\alpha}_j)$ is not globally Lipschitz \citep{Roberts1996}. In our application this condition is particularly restrictive and is not satisfied, for example, by the log-Gamma Langevin drift \eqref{eq;Langevin}. More broadly, it is unrealistic to ask practitioners without a mathematical background to carefully check whether the drifts they elicit meet this condition before implementation, and an ideal generalisable prior would not rely on a Lipschitz drift.

To mitigate instabilities when considering non-Lipschitz drifts we utilise a recently introduced scheme based on skew-symmetric innovation densities \citep{Livingstone2024},
\begin{equation}
\label{eq;Barker}
    \check{\alpha}_{(j+1)\sigma^2} = \check{\alpha}_{j\sigma^2} + \theta_{j+1}, \quad f_0(\theta_{j+1}\mid \check{\alpha}_j) \propto \left(1 + \tanh(\mu(\check{\alpha}_{j\sigma^2})\theta_{j+1})\right)\phi(\theta_{j+1}\mid \sigma^2).
\end{equation}
Here $\phi(\cdot\mid\sigma^2)$ is the density of a $\text{Normal}(0,\sigma^2)$ random variable, and $1 + \tanh(\mu(\check{\alpha}_{j\sigma^2})\theta_{j+1})$ is a skewing term corresponding to the cumulative distribution function of a logistic distribution evaluated at $\mu(\check{\alpha}_{j\sigma^2})\theta_{j+1}$.\footnote{In fact, this construction is more general in that any CDF of a centred symmetric random variable is sufficient.} Similarly to the Euler-Maruyama method this approach introduces  approximation error that vanishes as $\sigma \to 0$.

Intuitively, while the Euler-Maruyama method \textit{shifts} $\theta_{j+1}$ in the direction of the drift, the skew-symmetric scheme \textit{skews} $\theta_{j+1}$ in the direction of the drift. This difference is depicted in Figure \ref{fig:Discretisation} for fixed $\sigma$ and increasing values of $\mu(\check{\alpha}_{\check{y}})$. In \cite{Livingstone2024} the authors show that \eqref{eq;Barker} is more robust than \eqref{eq;EM}, both to the choice of $\sigma$ and to non-globally Lipschitz $\mu(\check{\alpha}_{\check{y}})$. In both of the above cases we initialise the process at $\check{\alpha}_0 \sim \text{Normal}(0,\sigma_0^2)$. In Section \ref{sec;Mixing} we will also show that this approach is computationally advantageous, when combined with the prior for $\{s_j\}_{j=1}^J$ introduced in the following Section.

\begin{figure}
    \centering
    \includegraphics[width=\linewidth]{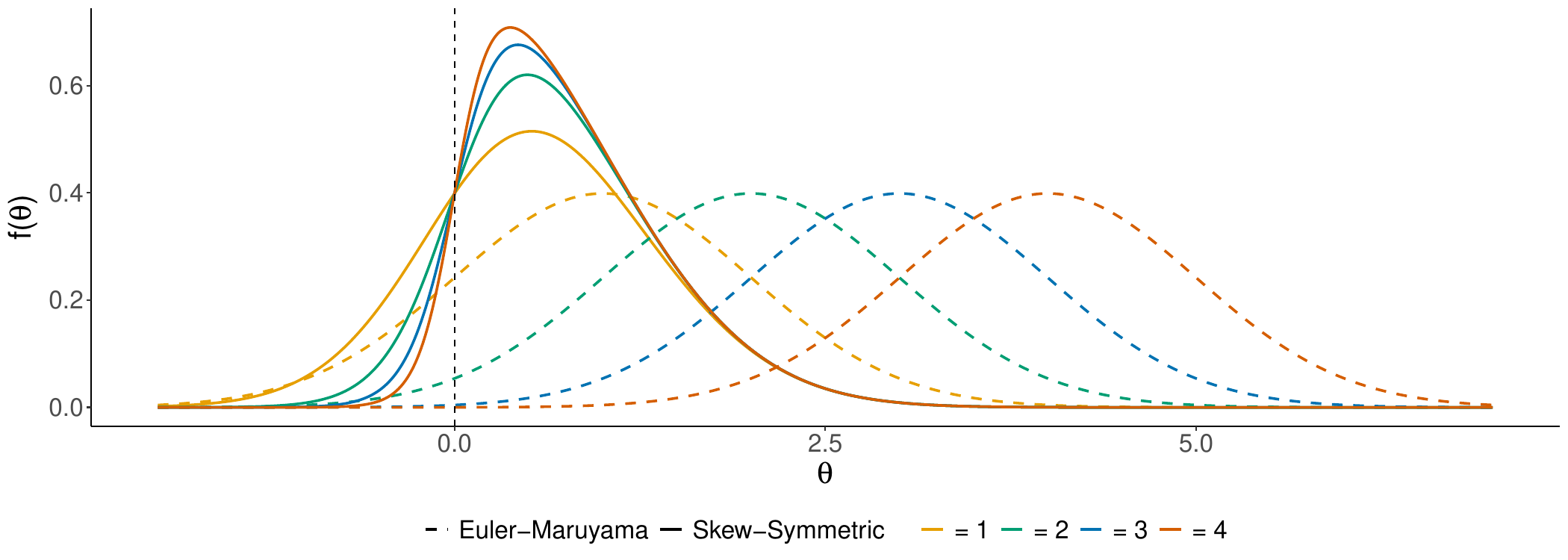}
    \caption{Density functions for the innovations $\theta$ under the Euler-Maruyama (dashed) and skew-symmetric (solid) schemes for increasing values of $\mu(\check{\alpha}_{\check{y}}) = 1, 2, 3, 4$ and fixed $\sigma^2$.}
    \label{fig:Discretisation}
\end{figure}

To complete the specification of the above process, we place an exponential prior on~$\sigma$,
\begin{equation*}
    \sigma \sim \text{Exponential}(a),
\end{equation*}
corresponding to a penalised-complexity prior \citep{Simpson2017}. This prior shrinks the innovation standard deviation towards 0, thus shrinking the overall hazard function towards a single constant value. In all the examples here we set the rate of the exponential prior to $a = 2$. Justification for this choice is provided in the supplementary materials \citep{Hardcastle2024supp}, however we expect inferences to be generally unaffected for sensible choices of $a$.

\subsection{A prior for knot locations}
\label{sec;knot_prior}
The specification of the model is completed with a prior for the knot locations, $\{s_j\}_{j=1}^J$. The standard approach is for $\{s_j\}_{j=1}^J$ to be fixed a priori, for example at set quantiles of observed event times or at regular intervals \citep{Murray2016}. The resulting hazard, however, will be sensitive to this specification, particularly in the absence of data during extrapolation period.

We address these issues directly by assuming that $\{s_j\}_{j=1}^J$ arise from a Poisson Point Process with intensity $\gamma$ on the interval $(0,y_+)$, denoted throughout as $\{s_j\}_{j=1}^J \sim \text{PPP}(\gamma, (0,y_+))$. This can be expressed equivalently as
\begin{equation}
    \label{eq;PPP}
    J \sim \text{Poisson}(y_+\gamma), \quad \{s_j\}_{j=1}^J \overset{iid}{\sim} \text{Uniform}(0,y_+).
\end{equation}
This prior (and variations) have been considered previously \citep{Chapple2020, Demarqui2012}; however, this specification is commonly avoided due to the computational challenges it introduces. 

Note that, in contrast to \citep[e.g]{Roberts2010}, in the above construction the discretisation step size $\sigma^2$ is independent of the distance between knots $(s_j - s_{j-1})$. Because of this, the prior for $\log h(y)$ is in fact given by \eqref{eq;SDE} through a random time-change defined by \eqref{eq;PPP}, such that a priori
\begin{gather*}
    \log h(y) = \alpha_{j} = \check{\alpha}_{j\sigma^2}, \quad
    j = \min\{l : y < s_l\}.
\end{gather*}
This construction can be viewed as first simulating a numerical skeleton $\{\check{\alpha}_{j\sigma^2}\}_{j=1}^J$, via \eqref{eq;Barker} and then mapping this to the time-scale of interest, $(0,y_\infty)$, via \eqref{eq;PPP}.

Under this prior the number of knots, and therefore the flexibility of the hazard function, is directly controlled by $\gamma$. The diffusion speeds up when the data require a more volatile hazard and slows down when the hazard is less volatile, adapting to the data without being constrained by the prior. In terms of extrapolation, the advantage of this formulation is that $\gamma$ determines the speed at which $\mu$ dominates the long-term hazard. Intuitively, if the hazard function is more volatile in the observation period we should expect the influence of the data to decay faster in the extrapolation period (with the prior taking over faster). Conversely, if the hazard is less volatile the data should remain informative for longer during the extrapolation period.

We consider two approaches for specifying $\gamma$. The first is to consider a set of models for a fixed number of values for $\gamma$. These models can then be compared using information criteria. A second, fully Bayesian approach places a prior $\gamma \sim \text{Gamma}(a,b)$, equivalent to a Negative Binomial prior on $J$. We compare these methods further in Section \ref{sec;Colon}.

\subsection{Incorporating covariates}
\label{sec;covariates}
We have focused so far on specification of a prior for the log-baseline hazard and corresponding knots. Both priors extend to the case when covariates are incorporated in the model. 

For the underlying diffusion, it suffices to provide a specification for each $\beta_j$ process, independently of the diffusion for $\alpha_j$. As $\beta_j$ is a covariate effect, a natural process to specify is a Langevin diffusion with Gaussian stationary distribution.\footnote{Note this is equivalent to specifying an Ornstein-Uhlenbeck prior for $\beta_j$.} Setting the mean to 0 implies the expected long-term treatment effect vanishes as $y\to\infty$. In Section \ref{sec;TimeVarying} we show that $\mu(\beta_j,y)$ can be modified to incorporate a waning long-term treatment effect, a common and important assumption in many HTA analyses \citep{Jackson2017}. Specifying a non-zero mean would imply a long-term proportional average treatment effect, but this would need to be supported by strong clinical opinion. Similarly for the prior for $\{s_j\}_{j=1}^J$, we define a set of knots $\{s_j^k\}$ independently of the set of knots for the baseline log-hazard. 

\section{Posterior sampling}
\label{sec;Sampling}
The diffusion piecewise exponential model generates several challenges for commonly used Bayesian inference engines primarily associated with the prior on $\{s_j\}_{j=1}^J$. The resulting posterior is transdimensional for which the standard sampling approach is to use reversible jump MCMC \citep{Green1995}. These samplers require the specification of a between-model proposal distribution that must be carefully tuned to achieve modest acceptance rates. This results in a noticeable increase in computational cost due to the additional likelihood evaluations required at each transdimensional step. 

Note, in addition to the above, that the fixed $\{s_j\}_{j=1}^J$ model can still present sampling challenges. The potential function, $U(x) :=-\log\pi(x)$ is non-Lipschitz, causing instability in gradient-based methods such as the Metropolis Adjusted Langevin Algorithm (MALA) \citep{Roberts1996} and Hamiltonian Monte Carlo (HMC) \citep{Livingstone2019}. Further, in the presence of high censoring rates (which is precisely the scenario we are considering), posteriors of survival models can exhibit high skew, again challenging standard Metropolis-Hastings methods \citep{Hird2020}.

To circumvent these issues we utilise sampling techniques based on continuous time Piecewise Deterministic Markov Processes (PDMPs) \citep{Fearnhead2024b}. These processes are non-reversible (e.g. \cite{Andrieu2021}) and use ballistic motion and gradient information to efficiently explore the target distribution. Further, in contrast to MALA and HMC, they have constant velocity and require minimal tuning making them more robust to non-Lipschitz potentials. Recent works have also shown that they are able to sample from transdimensional posteriors induced by spike and slab priors without the need for additional likelihood evaluations or tuning of between-model proposals \citep{Chevallier2023, Bierkens2023a}. The key contribution of this Section is to show how these results can be extended to more general transdimensional posteriors. A concise presentation of the algorithm is given in the supplementary material \citep{Hardcastle2024supp}.

We briefly note that for posterior sampling we use a non-centred parameterisation of the model \citep{Betancourt2015}
\begin{equation*}
    \Tilde{\theta}_0 = \alpha_0 \quad \Tilde{\theta}_j = \sigma^{-1}\theta_j.
\end{equation*}
This avoids strong posterior dependence between $\alpha_j$'s and eliminates funnel-shaped geometry that can arise when simultaneously updating $\theta$ and $\sigma$ (e.g. \cite{Betancourt2015}).

\subsection{The Bouncy Particle Sampler and Forward event chain Monte Carlo}
\label{sec;BPS}

Piecewise Deterministic Monte Carlo\footnote{In the Statistical Physics literature, where several of these ideas were originally developed, these methods are commonly referred to as Event Chain Monte Carlo.} methods have emerged as a promising class of non-reversible processes for posterior sampling in challenging Bayesian inference problems. In this work we use a variation of the bouncy particle sampler \citep{Bouchard2018}, known as Forward Event Chain Monte Carlo \citep{Michel2020}. A more general introduction to these methods can be found in \cite{Fearnhead2024b}. 

Given the current sampler time, $t$, the bouncy particle sampler is defined on an state-space augmented with velocities $z_t = (x_t, v_t) \in \mathbb{R}^d\times\mathbb{S}^{d-1}$, with $x = (\Tilde{\theta}, \sigma)$ and $d = J(p+1) + 1$. The continuous-time deterministic evolution of $z_t$ is given by the system of ordinary differential equations
\begin{equation*}
    \frac{dx_t}{dt} = v_t, \quad \frac{dv_t}{dt} = 0,
\end{equation*}
which results in $v_t$ driving the linear evolution of $x_t$. This evolution is interrupted by a jump process, with jump times given by the inhomogeneous event rate, $\lambda(t)$, and the transformation of $z_t$ at these times given by a  deterministic map $Q$. For the standard bouncy particle sampler these are defined as
\begin{gather*}
    \lambda(t) = \max\{0, \langle v_t, \nabla U(x_t)\rangle\} + \lambda_r, \\
    Q: (x_t, v_t) \mapsto (x_t, v_t - 2v_t^{\nabla U}).
\end{gather*}
Here, $\lambda_r \in \mathbb{R}_{\geq 0}$ is the refreshment rate, with $\lambda_r > 0$ required to ensure the process is irreducible, and $v_t^{\nabla U}$ arises from the orthogonal decomposition of $v_t$ with respect to $\nabla U(x_t)$, $v_t = v_t^{\nabla U} + v_t^{\perp}$, with $v_t^{\perp} \perp \nabla U(x_t)$. Events associated with the first term of $\lambda(t)$ only occur when $\langle v, \nabla U(x_t)\rangle\ > 0$, i.e when the process is moving into areas of lower posterior mass, resulting in fast convergence towards areas of high posterior mass, meanwhile the map $Q$ corresponds to a reflection of the velocity off the tangent to the potential. The above process has stationary distribution $\rho(z) \propto \pi(x)\nu(v)$, where $\nu(v)$ is the uniform measure on $\mathbb{S}^{d-1}$. Samples from $\pi(x)$ are then recovered by marginalising out $v$.

A well-known drawback of the bouncy particle sampler is that $\lambda_r$ requires careful tuning, and that optimal values of $\lambda_r$ can result in approximately $70\%$ of events being refreshments \citep{Bertazzi2022}. This replaces the ballistic motion of the process with increasingly diffusive dynamics, inhibiting sampling efficiency. To remedy this issue the forward event chain method \citep{Michel2020} replaces the the deterministic mapping $Q$ with a jump kernel stochastically updating both of $(v_t^{\nabla U}, v_t^{\perp})$. This incorporates refreshment into reflections while ensuring the process targets the correct stationary distribution. In particular, $v_t^{\perp}$ is re-sampled as $\Tilde{v}_t^{\perp}$ such that $\langle v_t^{\perp}, \Tilde{v}_t^{\perp} \rangle \geq 0$, reducing the diffusivity as compared to full refreshments. Further, the update to $v_t^{\perp}$ need not occur at every event, but can be set to update at the first event after each time given by a homogeneous Poisson process with rate $\lambda_e$. 

To understand the robustness these alterations introduce we can consider the process when $\lambda_e$ is poorly tuned. In the case $\lambda_e$ is set to be too large the refreshment rate is capped by the rate at which reflections occur. Conversely, when $\lambda_e$ is too small, the stochastic updates of $v_t^{\nabla U}$ ameliorate the irreducibility issues observed in the original bouncy particle sampler. In contrast, poor tuning of the refreshment rate $\lambda_r$ in the bouncy particle sampler can significantly impact the resulting process. The forward event chain approach has seen uptake in the statistical physics literature, but we believe this to be the first application to an applied Bayesian statistics problem. The specific strategies from \cite{Michel2020} used in this work are outlined in the supplementary material \citep{Hardcastle2024supp}.

\subsection{Generating the process} 
\label{sec;PDMP_gen}

The deterministic dynamics and jump kernel of the bouncy particle sampler and forward event chain Monte Carlo are simple to generate, but the inhomogeneous Poisson process associated with $\lambda(t)$ is typically more challenging, and an area of active research \citep{Andral2024, Corbella2022, Sutton2023}. The primary method we use to generate this event rate is the splitting schemes approach of \citep{Bertazzi2023}, which alternates between updating the deterministic and event rate processes over a given time step $\Delta t$. Without adjustment this scheme introduces a small approximation error into the posterior, which could in principle be corrected for using a non-reversible Metropolis--Hastings filter as described in \cite{Bertazzi2023}, though we found this to be unnecessary here. Note that this replaces the continuous time sample paths of the original process with a discrete time approximation.

A second exact scheme we consider is to update $\sigma$ using conditional Metropolis-within-Gibbs steps at exponential times in the sampler \citep{Sachs2023}. For the random walk, Gaussian Langevin and Gompertz drifts, the potential for $\Tilde{\theta}$ is then convex and the process can be generated exactly by determining event times using a line search \cite[Example 1]{Bouchard2018}. 

Of the two schemes we prefer the first. Both algorithms have tuning parameters that are easy to specify, although we find this is marginally easier to do in the former case. Further the sampling efficiency of the second method seems to be inhibited, both by reversible updates for $\sigma$, and conditional updating that struggles to explore the geometry of the posterior. Finally, the splitting method allows for a general drift function $\mu$ in the diffusion prior to be specified, which is an important goal of this work. Full algorithms for both methods are produced in the supplementary materials \citep{Hardcastle2024supp}.

\subsection{Spike and slab PDMPs}
\label{sec;SMPDMP}

Transdimensional posteriors are often induced by priors that are mixtures of continuous and atomic components (commonly referred to as spike and slab priors) of the form
\begin{equation}
    \label{eq;SnS}
    \pi_0(d\Tilde{\theta}_j) \propto (1-\omega)\delta_0(d\Tilde{\theta}_j) + \omega f_0(d\Tilde{\theta}_j),
\end{equation}
where $\delta_0$ is a Dirac mass at 0, $f_0$ is a continuous density and $\omega \in (0,1)$.  In all of our examples we set $\omega = 0.5$.

These posteriors can be sampled from directly using the forward event chain sampler, by moving from the continuous component to the atomic component at exactly the point when $\Tilde{\theta}$ intersects the hyperplane $\{\Tilde{\theta} : \Tilde{\theta}_j 
 = 0\}$ \citep{Bierkens2023a,Chevallier2023}. Equivalently this can be seen as setting $v_j \mapsto 0$ at this point, with an appropriate renormalisation step when $v \in \mathbb{S}^{d-1}$. For forward event chain Monte Carlo, $v_j$ is then refreshed after an exponential time, $\tau$ with 
 \begin{equation}
    \label{eq;stickyrate}
     \tau \sim \text{Exponential}\left(\frac{\omega}{1-\omega}f_0(0)|\mathcal{J}|\right),
 \end{equation}
 where $|\mathcal{J}|$ is the Jacobian associated with renormalising $v$. The remaining terms in this rate are given by a posterior ratio, between the model where $\Tilde{\theta}_j$ is on the slab and $\Tilde{\theta}_j$ is on the spike. Homogeneity of \eqref{eq;stickyrate} arises due to the transdimensional updates occurring at a point where the likelihoods in both models are equivalent and therefore cancel, along with the majority of prior terms, simplifying this posterior ratio. When multiple components are considered simultaneously the next unsticking time is simply given by summing together the unsticking rates, with the component to update then selected uniformly at random.
 
The construction of \eqref{eq;stickyrate} is remarkable as, in contrast to most reversible jump MCMC methods \citep{Green1995}, transdimensional updates do not require either the specification of tuning parameters or likelihood evaluations. Following the terminology of \cite{Bierkens2023a} we will refer to these dynamics as sticky PDMP dynamics from this point forward.

\subsection{Sticky PDMPs for knot selection}
\label{sec;PDMP_knot}

While the computational efficiency and lack of tuning parameters in the above construction is appealing they have not yet been applied to transdimensional posteriors beyond those induced by spike and slab priors. We now show that the above dynamics can be extended to sampling the location of knots under a Poisson process prior following a two step procedure: \textit{i)} Given a fixed set of candidate knots, use sticky PDMP dynamics to update which knots are active in the model and which are inactive. \textit{ii)} Use a Gibbs step to update the set of candidate knots solely through updating the set of inactive knots. 

\subsubsection{Updating given fixed candidate knot locations}

We begin by considering the simpler case in which a fixed a set of unique candidate knot locations $\{m_i\}_{i=1}^M$ with scaled innovation parameters $\Tilde{\theta} \in \mathbb{R}^M$ are chosen. We will assume that this set is composed of a set of active knots $\{s_j\}_{j=1}^J$ such that $m_i \in \{s_j\}_{j=1}^J$ implies that $\Tilde{\theta}_i \neq 0$ almost surely, and a set of inactive knots, $\{r_j\}_{j=1}^{M-J}$, such that $m_i \in \{r_j\}_{j=1}^{M-J}$ implies $\Tilde{\theta}_i = 0$. Assuming a priori that $\mathbb{P}(m_i \in \{s_j\}_{j=1}^J) = \omega$ is equivalent to defining a spike and slab prior introduced in \eqref{eq;SnS} independently for each $\Tilde{\theta}_i$, with $f_0(d\Tilde{\theta}_i)$ corresponding to prior density for $\Tilde{\theta}_i$ induced by \eqref{eq;Barker}.

Sticky PDMP dynamics can then be directly applied without modification, with moves onto and off the spike updating membership of $\{s_j\}_{j=1}^J$ and $\{r_j\}_{j=1}^{M-J}$. When viewed in $\alpha$-space, the resulting dynamics split and merge the trajectory of neighbouring $\alpha$'s in continuous time, showcasing a natural connection to split-merge reversible jump moves used in several settings \citep{Brooks2003}. These dynamics are illustrated in Figure \ref{fig:SMPDMP}. 

Note, these dynamics cannot be immediately extended to the model with a Poisson process prior, as the set of candidate knots is uncountable. Each element of $\{r_j\}$ has an associated Poisson clock with rate defined in \eqref{eq;stickyrate}, and therefore the resulting combined unsticking rate will be infinite unless only a countable number of them are non-zero.

\begin{figure}
    \centering
    \includegraphics[width=\linewidth]{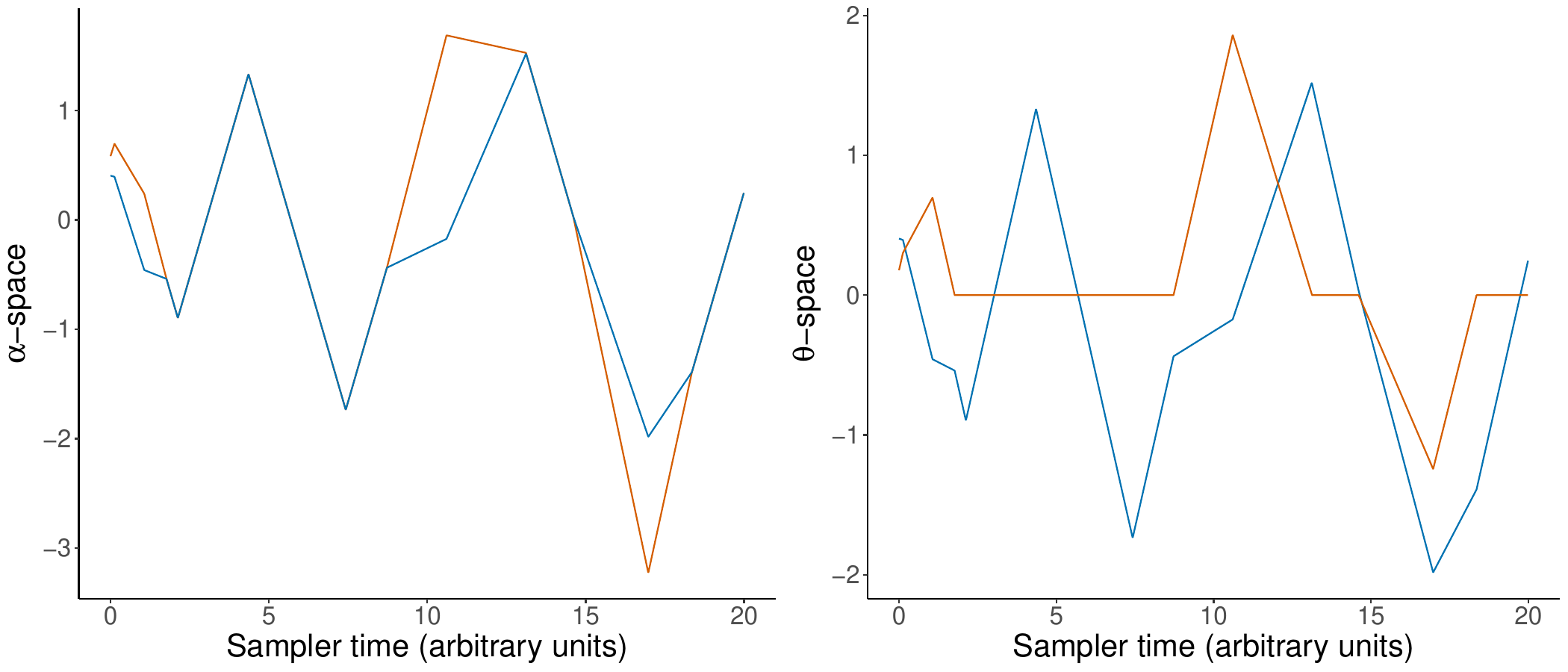}
    \caption{Trajectories for the PDMP sampler for knot selection viewed on $\alpha$-space (left) and $\theta$-space (right).}
    \label{fig:SMPDMP}
\end{figure}

\subsubsection{Updating the set of candidate knots}

The second part of this procedure circumvents this explosivity, by initialising the sampler with a finite set of candidate knots that is then regularly updated via a Gibbs step. To define this update, we first let the intensity $\gamma := \omega\Gamma$ with $\omega$ defined in equation \eqref{eq;SnS} and $\Gamma > 0$. This does not alter the prior introduced in Section \ref{sec;knot_prior}.  Under this specification the set of (now random) candidate knot locations $\{m_i\}_{i=1}^M \sim \text{PPP}(\Gamma, (0,y_+))$, and $\{s_j\}_{j=1}^J$ can be viewed as a thinned version of this process with thinning probability $\omega$. As in the previous section, this is equivalent to defining a spike and slab prior \eqref{eq;SnS} independently for each $\Tilde{\theta}_i$.

A valid and computationally efficient Gibbs step then proceeds by re-sampling $\{r_j\}_{j=1}^{M-J} \sim \text{PPP}((1-\omega)\Gamma, (0,y_+))$. As these knots are inactive, updating their location does not alter the value of the likelihood and they can therefore be drawn directly from the prior. Conversely, if $\{m_i\}_{i=1}^M$ was updated this would require a Metropolis correction with corresponding likelihood evaluations. These Gibbs updates can occur at fixed or exponentially distributed times in the sampler \citep{Sachs2023}. Further, if a hyperprior has been placed on $\gamma$, this can also be updated at these times. These steps are shown in full in the algorithms presented in the supplementary materials \citep{Hardcastle2024supp}.

\subsection{Mixing time of the process}
\label{sec;Mixing}
The efficiency of the above process is dependent on the value of $f_0(0)$, i.e the continuous part of the prior for $\Tilde{\theta}_j$ evaluated at 0. This can be seen through \eqref{eq;stickyrate}, as smaller values of $f_0(0)$ result in longer sticking times at 0, requiring the process to be run for longer to obtain the same estimates. We can in fact formalise this intuition to compare the mixing times of the process under different parameterisations of the underlying diffusion.
\begin{proposition}
    \label{prop;Mixing} 
    Given a fixed set of candidate knots, let $\tau_0^{S}$ (respectively $\tau_0^{EM}$) be the recurrence time to the null model (i.e the model when all knots are inactive) under the skew-symmetric parameterisation (respectively the Euler-Maruyama parameterisation). Then
    \begin{equation}
        \label{eq;Prop3.1}
        \mathbb{E}[\tau_0^{S}] \leq \mathbb{E}[\tau_0^{EM}].
    \end{equation}
\end{proposition}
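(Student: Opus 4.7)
The plan is to reduce the proposition to a pointwise comparison of the two innovation densities at $x=0$ versus elsewhere. I would start from the standard renewal (Kac-type) identity for a continuous-time Markov process with an atom in its stationary distribution,
\begin{equation*}
\mathbb{E}[\tau_0] \;=\; \frac{1}{\pi(\mathrm{null})\,R_{\mathrm{leave}}},
\end{equation*}
where $R_{\mathrm{leave}}$ is the total unsticking rate out of the null state. Plugging in $R_{\mathrm{leave}} = M\,\tfrac{\omega}{1-\omega}\,f_0(0)\,|\mathcal{J}|$ from \eqref{eq;stickyrate} (at null all coordinates share the common conditioning value, so the drift evaluates to a single quantity $\mu(\alpha_0)$) together with $\pi(\mathrm{null}) \propto (1-\omega)^M \mathcal{L}(\mathrm{null})/Z$, the proposition reduces to the single algebraic inequality $f_0^{S}(0)\,Z^{EM} \geq f_0^{EM}(0)\,Z^{S}$, where $Z$ denotes the marginal likelihood under each scheme.

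The technical heart of the proof is a pointwise ratio comparison of the two innovation densities. Using the identity $1+\tanh(u) = 2e^{u}/(e^{u}+e^{-u})$, a direct calculation gives
\begin{equation*}
\frac{f_0^{EM}(x)}{f_0^{S}(x)} \;=\; \cosh(\sigma\mu x)\,e^{-\sigma^{2}\mu^{2}/2},
\end{equation*}
which, since $\cosh$ attains its minimum at the origin, is minimised at $x=0$. Rearranging yields the pointwise bound $f_0^{S}(0)\,f_0^{EM}(x) \geq f_0^{EM}(0)\,f_0^{S}(x)$ for all $x$. Expanding $Z$ as a sum over subsets $S\subseteq\{1,\ldots,M\}$ of active coordinates of integrals $\int \mathcal{L}\prod_{i\in S} f_0$ and applying this bound inside each integrand (the atomic term $(1-\omega)^{M}\mathcal{L}(\mathrm{null})$ is identical under both schemes) delivers $f_0^{S}(0)\,Z^{EM} \geq f_0^{EM}(0)\,Z^{S}$, completing the argument.

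The main obstacle I anticipate is that in the multi-knot setting the joint prior on $(\tilde{\theta}_1,\ldots,\tilde{\theta}_M)$ is \emph{not} a product measure: under the diffusion construction each conditional density for $\tilde{\theta}_j$ depends on the history through the drift $\mu(\check{\alpha}_{j-1})$, so a single pointwise comparison cannot simply be multiplied across coordinates. The resolution is that the ratio $\cosh(\sigma\mu x)\,e^{-\sigma^{2}\mu^{2}/2}$ is minimised at $x=0$ for \emph{every} value of the conditioning drift $\mu$, so the inequality can be propagated through a routine induction over coordinates, conditioning on the remaining components at each step and using the nonnegativity of the likelihood to preserve the bound under integration. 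A secondary, but standard, step is verifying the applicability of the Kac identity to this particular PDMP, which follows from ergodicity of the sticky forward event-chain sampler on the augmented state space (as already used in \citep{Bierkens2023a,Chevallier2023}).
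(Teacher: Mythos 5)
Your proposal follows the same route as the paper's proof in its essentials: both rest on a Kac/occupation-time identity for the sticky process and on comparing the conditional innovation density at the origin under the two schemes. Your explicit ratio $f_0^{EM}(x)/f_0^{S}(x)=\cosh(\sigma\mu x)\,e^{-\sigma^{2}\mu^{2}/2}$ is correct and in fact sharper than what the paper records --- the paper simply notes that the skewing term equals one at $\tilde{\theta}_j=0$, so $f_0^{S}(0)$ is a standard normal density at zero while $f_0^{EM}(0)$ is a mean-shifted normal density at zero, whence $f_0^{EM}(0)\le f_0^{S}(0)$. The paper then stops: it treats $\mathbb{E}[\tau_0]$ as a scheme-independent constant times a power of $f_0(0)^{-1}$ and concludes immediately, without tracking how the normalising constant of the posterior differs between the two parameterisations.

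The extra step you introduce to handle that normalising constant is where the gap lies. Having reduced the claim to $f_0^{S}(0)\,Z^{EM}\ge f_0^{EM}(0)\,Z^{S}$, you propose to verify it term by term over sub-models using the pointwise bound $f_0^{S}(0)\,f_0^{EM}(x)\ge f_0^{EM}(0)\,f_0^{S}(x)$. This works for the atomic term and for sub-models with a single active knot, but fails once a sub-model has two or more active knots: the pointwise bound is equivalent to $f_0^{EM}(x)\ge e^{-\sigma^{2}\mu^{2}/2}f_0^{S}(x)$, so applying it to each of the $|S|$ active coordinates yields $\prod_{i\in S}f_0^{EM}\ge\exp\bigl(-\tfrac{\sigma^{2}}{2}\sum_{i\in S}\mu_i^{2}\bigr)\prod_{i\in S}f_0^{S}$, and the accumulated factor is in general strictly smaller than the single factor $f_0^{EM}(0)/f_0^{S}(0)=\exp\bigl(-\tfrac{\sigma^{2}}{2}\mu(\alpha_0)^{2}\bigr)$ that your reduction requires. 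The ``routine induction over coordinates'' cannot repair this: conditioning on the remaining components at each step changes the value of the drift $\mu_i$ but not the direction of the loss, which compounds with every additional active knot. To retain the $Z$-comparison you would need a genuinely different argument for it; alternatively, you could follow the paper and argue that the recurrence-time comparison can be made with the normalising constant held fixed, but that assertion is exactly what your more careful reduction shows needs justification.
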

\begin{proof}
    Following \cite[Remark 2.4]{Bierkens2023a}, as the process is invariant the expected recurrence time to the null model is inversely proportional to the expected occupation time in the null model,
    \begin{equation}
        \label{eq;Mixing}
        \mathbb{E}[\tau_0] \propto \left(\frac{\omega}{1-\omega}f_0(0\mid \Tilde{\theta},\sigma)|\mathcal{J}|\right)^{-M} \propto f_0(0\mid \Tilde{\theta},\sigma)^{-M}.
    \end{equation}
    Then note that under the skew-symmetric parameterisation, $f^{S}_0(0\mid \Tilde{\theta},\sigma)$ is the density of a standard Normal distribution evaluated at 0, as the skewing term equals one when $\Tilde{\theta}_j = 0$. Further under the Euler-Maruyama scheme the density, $f^{EM}_0(\cdot\mid \Tilde{\theta},\sigma)$ is that of a $\text{Normal}(\sigma^2\mu(\alpha_{j-1}),1)$ distribution. Therefore $f^{EM}_0(0\mid \Tilde{\theta},\sigma) \leq f^{S}_0(0\mid \Tilde{\theta},\sigma)$ and \eqref{eq;Prop3.1} follows directly.
\end{proof}

A direct consequence of Proposition \ref{prop;Mixing} is that we can expect faster mixing times under the skew-symmetric parameterisation. We support this argument empirically by examining the performance of the sampler under each parameterisation for identical $\mu(\alpha_j)$. In particular we consider mean 0 Gaussian Langevin diffusions with standard deviations $\phi_2 = 2$ and $\phi_2 = 0.2$, for increasing values of fixed $\sigma$. Note that the resulting $\mu(\alpha_j)$ is Lipschitz and the approximation of the drift should be stable under both parameterisations. 

Figure \ref{fig:param_exp} shows the resulting estimates of $\omega$. For the case $\phi_2=2$, $\mu(\alpha)$ is relatively flat and so both parameterisations provide good sampling for small values of $\sigma$, however the Euler-Maruyama parameterisation becomes increasingly unstable as $\sigma$ increases. For $\phi_2=0.2$ the Euler-Maruyama parameterisation remains stuck either on or off indicating noticeably slower mixing. There is larger variance in the estimates provided by the skew-symmetric parameterisation for larger values of $\sigma$, but the sampling is clearly improved compared to the Euler-Maruyama parameterisation. 

\begin{figure}
    \centering
    \includegraphics[width=\linewidth]{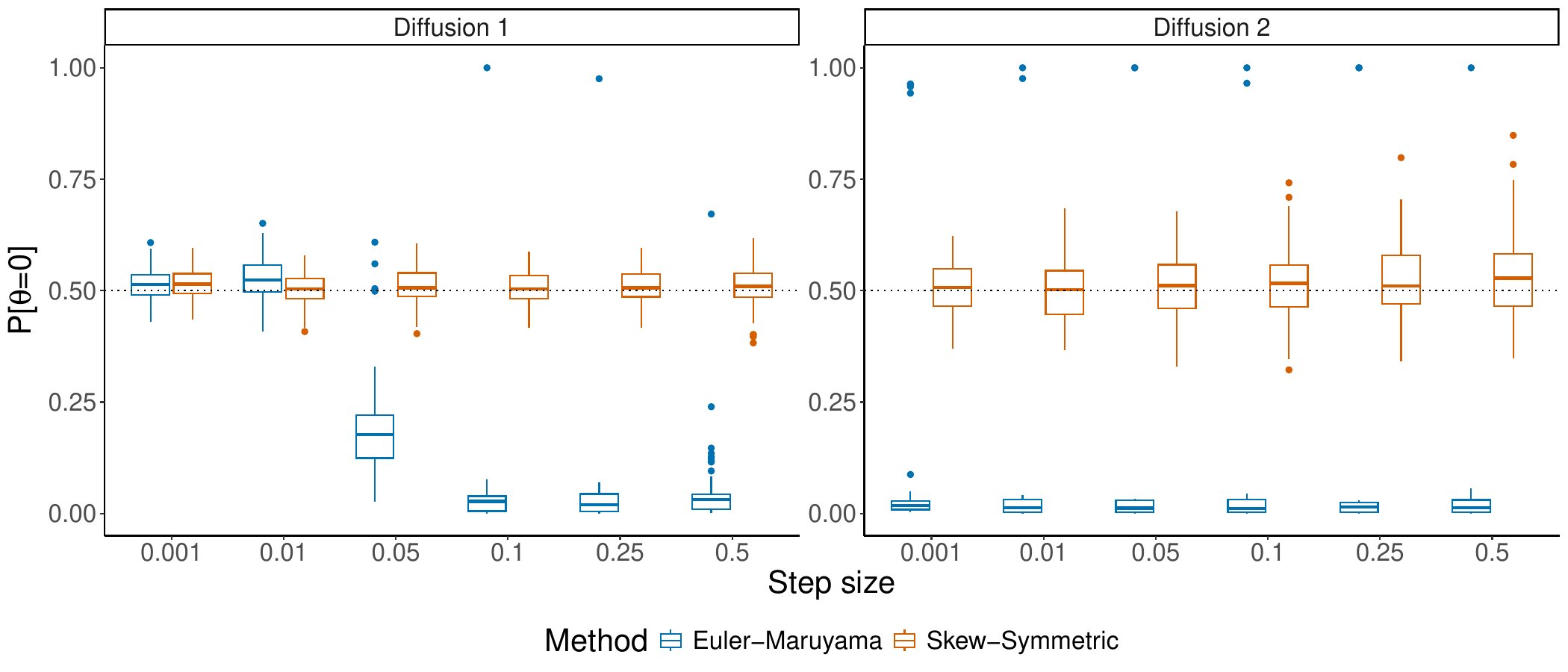}
    \caption{Comparison of efficiency of the PDMP sampler under the skew-symmetric and Euler-Maruyama parameterisations for different fixed values of $\sigma$. (Diffusion 1) $\mu(\alpha_j) = \alpha_j/2^2$, (Diffusion 2) $\mu(\alpha_j) = \alpha_j/0.2^2$. The dotted line indicates the true value $\omega = \mathbb{P}(\theta = 0) = 0.5$. }
    \label{fig:param_exp}
\end{figure}

\subsection{Generating extrapolations}
\label{sec;Extrap}
We note that the sampling methodology presented in this section has been designed to sample from parameters corresponding to the observed data period. Sampling parameters for the extrapolation period is easily handled using the skew-symmetric scheme directly along with posterior samples for $(\alpha_{M}, \sigma, \gamma)$. This direct sampling is more efficient than using the PDMP in the absence of data, and helps mitigates strong posterior dependencies that arise over extended time horizons. 

Discretising the diffusion does introduce a first order bias that vanishes as $\sigma \to 0$. To reduce the bias in the extrapolation period, the $(\gamma, \sigma)$ can be rescaled during this procedure. Full details are provided in the supplementary materials \citep{Hardcastle2024supp}.

\subsection{Comparison to reversible jump}
\label{sec;rj_comp}

To understand the efficiency of the developed methodology we compare the sampler to a comparable reversible jump scheme consisting of alternating an update for $\{s_j\}_{j=1}^J$ by either adding or removing a knot at each iteration with a random walk Metropolis update for $\tilde{\theta}$. We prefer the Random Walk to other choices of proposal kernel due to its robustness to tuning parameters that can be challenging to tune correctly within transdimensional sampling algorithms \citep{Livingstone2022}. 

We fit the diffusion piecewise exponential model to the Colon data set analysed in Section \ref{sec;Colon} using both the introduced PDMP sampler and the reversible jump sampler run for the same computational budget. Plots of the resulting hazard functions are shown in Figure \ref{fig:rj_exp}, along with trace plots for $h(1.2)$. Notably, the reversible jump sampler is unable to sufficiently explore the tails of the posterior for the hazard function, and therefore underestimates posterior uncertainty.

\begin{figure}
    \centering
    \includegraphics[width=\linewidth]{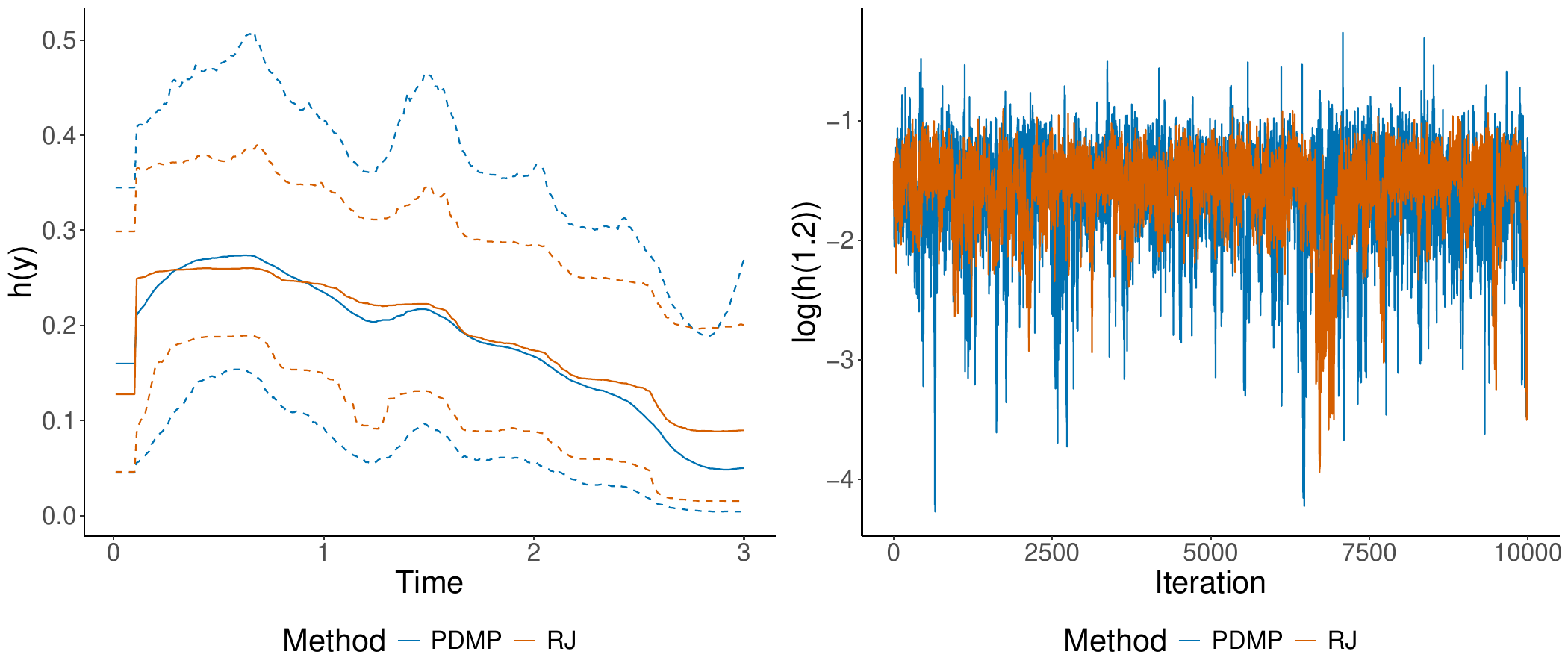}
    \caption{(Left) Inferred hazards under the reversible jump sampler (Orange) and the PDMP sampler (blue) with median hazards (solid) and 95\% credible intervals (dashed) reported. (Right) Trace plots for $\log(h(1.2))$ under the two samplers. Note that the reversible jump sampler is struggling to fully explore the tails of the hazard function.}
    \label{fig:rj_exp}
\end{figure}

It is natural to wonder how the design choices we have made affect the efficiency of the reversible jump sampler. In the supplementary materials we show results for alternative values of tuning parameter in the reversible jump proposal and full details of the algorithm \citep{Hardcastle2024supp}. Further, we provide a comparison to the sampler introduced by \cite{Chapple2020} for a similar model specification. 

\section{Applications}
\label{sec;Examples}
\subsection{Colon Cancer data}
\label{sec;Colon}
Our first illustrative application is to a dataset consisting of survival times from 191 colon cancer patients, of whom 22 were censored before 3 years and 104 were administratively censored at 3 years. This data is available via the \texttt{survextrap} R package \citep{Jackson2023}. To implement the model the practitioner is required to specify two quantities, the hyperprior (or fixed value) for $\gamma$, and the drift $\mu(\alpha_j)$.

\subsubsection{Specifying $\gamma$}

We consider both methods for the specification of $\gamma$ highlighted in Section \ref{sec;diff_prior}, namely \textit{i)} Selecting an optimal value of $\gamma$ based on information criteria. \textit{ii)} Placing a hyperprior on $\gamma$. 

Information criteria are commonly used when selecting a model for survival extrapolation \citep{Baio2020}. These results must be combined with an assessment of the plausibility of extrapolated hazards, however, as information criteria only assess goodness-of-fit within the observation period, providing no guarantees for the quality of extrapolations. As a result, analysts are often faced with the choice of either selecting a model that fits the observed data poorly or a model that exhibits unrealistic long-term behaviour. We note that the use of a more flexible model is not an automatic remedy to this issue. If no additional information is provided to guarantee the quality of extrapolations, then the above scenario will always be a possibility.

The diffusion piecewise exponential model avoids this trade-off through the specification of $\mu$, breaking the dependence between the model fit to the observation period and the limiting behaviour of the hazard function. As such information criteria can be used to select $\gamma$. The practical impact of this choice beyond $y_+$, is to control the rate at which the influence of the data in the observation period decay. Intuitively, if the observed hazard is more volatile, we can expect this influence to decay faster in comparison to a more stable hazard function.

In this work we use the leave-one-out information criteria estimated using Pareto-smoothed importance sampling \citep{Vehtari2017} due to its stability properties compared to alternative criteria. The same criteria are used in \citep{Jackson2023} to determine the number and location of knots when using M-splines. We believe that our approach is simpler, however, as it requires only the selection of a single parameter.

We find the approximation to the leave-one-out cross-validation score does not always sufficiently penalise overly complex models, resulting in implausibly shaped hazard functions. We therefore suggest that this score should be minimised, while also ensuring the shape of the hazard function remains plausible. The optimal value for the colon cancer data, using $\mu(\alpha_j) = 0$, is $\gamma = 7$. The full results of this procedure are available in the supplementary materials \citep{Hardcastle2024supp}.

For the second approach, to allow for consistent comparisons with the above procedure we specify 
\begin{equation*}
    \gamma \sim \text{Gamma}(7,1).
\end{equation*}
As noted previously we can view this as a Negative Binomial prior. Previous applications of Negative Binomial priors in similar contexts have found they are less informative than Poisson priors for $J$ \citep{Sharef2010}. In practice we find that while the Negative Binomial prior is robust to the specification of the overdispersion parameter, in the sense that posterior inferences are minimally affected, the specified prior mean can still be influential. Therefore in practice modelling needs to be coupled with sensitivity analysis to understand the influence of this choice.

\subsubsection{Specifying $\mu(\alpha_j)$}

Specification of $\mu(\alpha_j)$ drives the behaviour of the hazard function during the extrapolation period, and should be elicited using expert opinion or external data on the long-term behaviour of the hazard. In particular it should \textit{not} be selected using information criteria, as this only measures predictive ability during the observation period. 

We consider various specifications of the time-homogeneous drifts outlined in Section \ref{sec;diff_prior}. The use of Langevin diffusions with log-Gamma or Gaussian stationary distributions encodes an assumption that the expected hazard function will be constant as $y\to\infty$. To illustrate the method in the following examples we use Langevin diffusions with $\text{Normal}(\log(0.29),0.4)$ and $\text{log-Gamma}(2,7)$ stationary distributions for the log-hazard, and the Gompertz diffusion \eqref{eq;Gompertz} with $\psi = 0.3$. For each model generating two chains of 10,000 samples including burn-in took approximately 45 seconds. Examples of how to derive these prior drifts, full computational and modelling details are provided in the supplementary material \citep{Hardcastle2024supp}.

\subsubsection{Results} 

Mean survival estimates for each specification of $\mu(\alpha_j)$ under the Poisson and Negative Binomial priors are presented in Table \ref{tab:Colon}, with corresponding hazard functions in Figure \ref{fig:Colon_haz}. Posterior mean survival estimates for the observation period are almost identical under each specification of $\gamma$ and $\mu(\alpha_j)$, with inferences driven by the observed data. Similarly, $\mu(\alpha_j)$ has minimal influence on the hazard functions in the observation period, although notably the Negative Binomial prior provides a smoother fit than the corresponding Poisson prior. 

\begin{table}[]
    \centering
    \begin{tabular}{l|l|l}
       Model   & $\mathbb{E}[Y]$ on $(0,y_+)$ & $\mathbb{E}[Y]$ on $(0,y_\infty)$ \\
       \hline
        Random Walk (Poisson)     & 2.19 (2.01, 2.36) & 4.73 (3.14, 6.09) \\
        Random Walk (Neg. Binomial)    & 2.21 (2.02, 2.38) & 4.67 (3.21, 6.06) \\
        Log-Normal stationary (Poisson) & 2.19 (1.99, 2.36) &  3.80 (3.22,  4.49) \\
        Log-Normal stationary (Neg. Binomial) & 2.20 (1.99, 2.39) & 3.97 (3.26, 4.82) \\
        Gamma stationary (Poisson)         & 2.19 (2.01, 2.36) & 4.31 (3.29, 5.52) \\
        Gamma stationary (Neg. Binomial)   & 2.21 (2.01, 2.39) & 4.39 (3.34, 5.57) \\
        Gompertz (Poisson)         & 2.19 (2.02, 2.36) & 4.43 (2.94, 5.89) \\
        Gompertz (Neg. Binomial)   & 2.20 (2.01, 2.38)  &  4.44 (3.03, 5.84) \\
        \hline
        Log-normal parametric   &  2.18 (2.03, 2.32) & 5.79 (4.71, 6.86) \\
        Independent piecewise exponential     & 2.27 (2.11, 2.42) & 5.37 (4.10, 7.34) \\
        M-spline (final knot = 5) & 2.25  (2.10, 2.40) & 6.89 (4.65, 9.10)\\
        M-spline (final knot = 10) & 2.25 (2.10, 2.41) &  6.57 (4.00, 8.81) \\
        M-spline (final knot = 15) & 2.26 (2.10, 2.40)  & 6.45 (3.65, 8.56) \\
        \hline
    \end{tabular}
    \caption{Mean survival estimates for the colon cancer data for the observation period and total window of interest with 95\% credible intervals. (Top) Estimates under varying specifications of $\mu(\alpha_j)$ for both the Poisson and Negative Binomial priors. (Bottom) Estimates from the log-normal standard parametric model, an independent piecewise exponential model, and M-spline hazard model.}
    \label{tab:Colon}
\end{table}

In contrast, mean survival estimates in the extrapolation period are highly reliant on the information encoded in $\mu(\alpha_j)$. In particular the credible intervals under the Random Walk and Gompertz drifts are larger than those under the Log-Normal and Gamma Langevin drifts. This difference in behaviour can also been seen in the hazard functions, where the credible intervals are noticeably larger under the former prior specifications. In general, although not for the random walk prior, the Negative Binomial specification results in larger estimates of mean survival. This is due to the smoother hazard function inferred for the observation period, slowing the speed of the underlying diffusion and the corresponding rate that the influence of the prior grows. Note that this behaviour is because the prior information encodes a typically higher hazard value than that observed at the end of the observation period. If the converse were true, then the Negative Binomial prior would result in more conservative estimates of mean survival. Finally, Figure \ref{fig:Colon_haz} shows that the Gompertz drift results in large credible intervals (larger in fact than the random walk prior), suggesting this prior does not encode much information in the extrapolation period. This is due to the exponential form of the Gompertz hazard function. As such, extrapolations are highly sensitive to the hazard observed at the end of observation period. We explore improvements to this specification in Section \ref{sec;TimeVarying}.

\begin{figure}[]
    \centering
    \includegraphics[width=\linewidth]{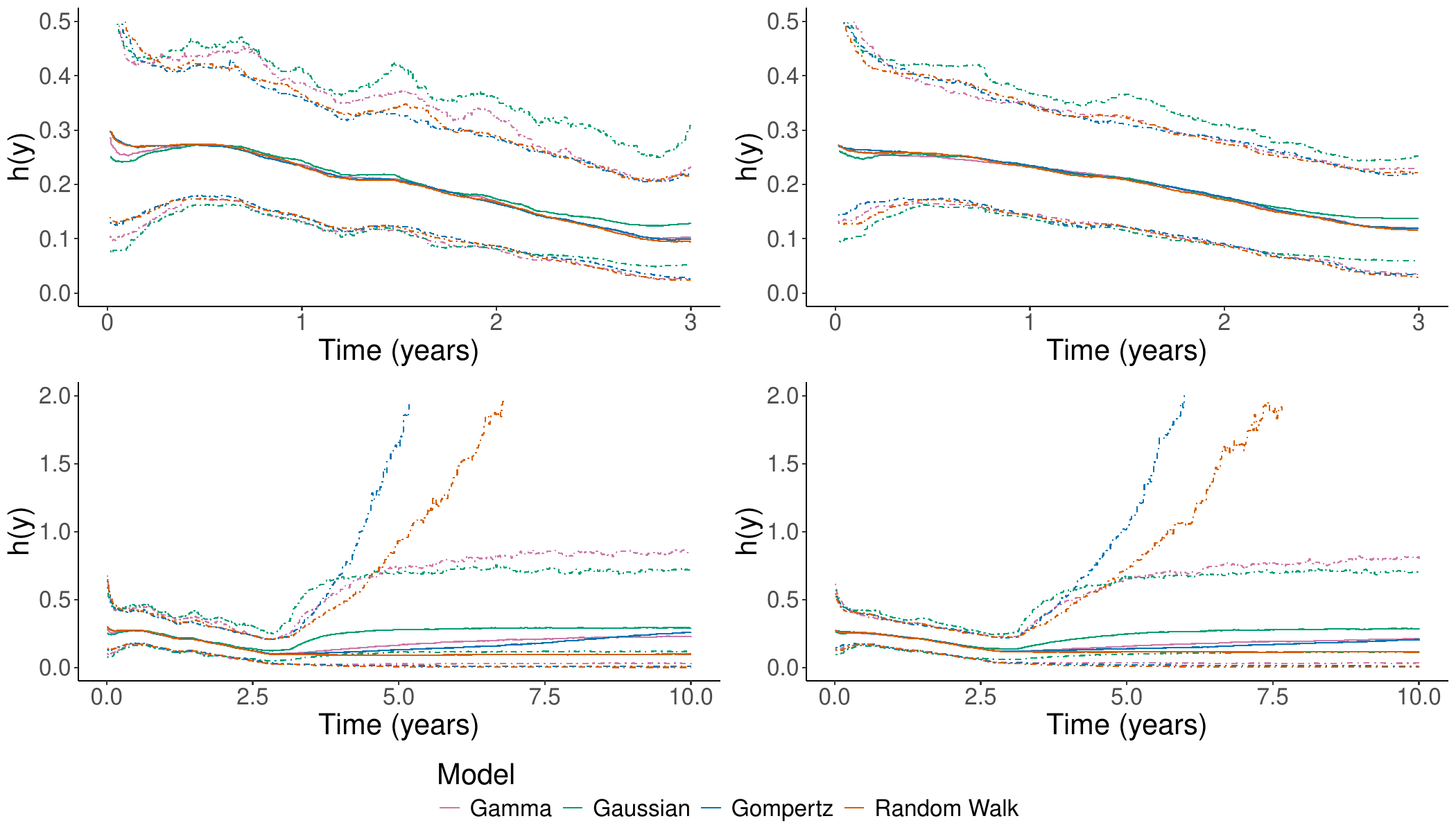}
    \caption{Hazard functions for the colon cancer data for observation period (top) and total period of interest (bottom) under the Poisson (left) and Negative Binomial (right) prior specifications. Median hazard values (solid) and corresponding 95\% credible intervals (dashed) are reported for varying specifications of $\mu(\alpha_j)$.}
    \label{fig:Colon_haz}
\end{figure}

\subsubsection{Alternative approaches}

To contextualise the inferences obtained under the diffusion piecewise exponential model, we consider three alternative methods: \textit{i)} The standard approach of selecting a two-parameter parametric model using information criteria (in this case the log-Normal parametric model) \citep{Latimer2011, Baio2020}. \textit{ii)} The piecewise exponential model with independent priors, where the hazard at the end of the observation period is taken as the hazard for the extrapolation period \citep{Cooney2023a}. \textit{iii)} Modelling the hazard using M-splines \citep{Jackson2023}. In particular, as extrapolations are based on the placement of a final knot on $(y_+, y_\infty)$, we consider inferences under three different knot locations. Full implementation details and additional analysis are provided in the supplementary materials \citep{Hardcastle2024supp}. Mean survival estimates are reported in Table \ref{tab:Colon}. 

In each case mean survival estimates for the observation period are close to those reported by the diffusion piecewise exponential model, although the spline and independent piecewise model provide slightly larger estimates of mean survival. We expect this to be due to the influence of $\mu(\alpha_j)$ at the end of the observation period when less data are available. Total mean survival estimates vary significantly between models. Note that the log-normal reports the smallest credible intervals, as the hazard in the extrapolation period inherits the parameter uncertainty from the observation period, and is therefore underestimating the uncertainty associated in total mean survival.

Both the independent piecewise model and the M-spline models report far higher values of mean survival in the extrapolation period. As the independent piecewise model extrapolates a constant hazard from the end of the observation period, this estimate is large with smaller credible intervals than the diffusion piecewise exponential model, as there is no additional uncertainty associated with the hazard as $y\to\infty$. Under the M-spline model, the uncertainty associated with the hazard grows until the final knot, after which a constant hazard is extrapolated. As evidenced in the estimates reported in Table \ref{tab:Colon} the placement of this final knot is highly influential, yet it is unclear how this knot should be placed beyond trial and error.

\subsection{Time varying drifts}
\label{sec;TimeVarying}
In the preceding Section we have only considered time homogeneous drift functions to guide extrapolations. As observed in Section \ref{sec;diff_prior}, however, the prior structure can naturally be extended to incorporate time-varying drifts, $\mu(\alpha_j,y)$. This allows for a far more expressive range of expert information to be encoded into the prior.

\subsubsection{Example time-varying drifts}

For the log-baseline hazard we consider two time-varying drifts
\begin{gather}
    \label{eq;Converge}
    \mu(\alpha_j, y) = \psi_1(y) - \psi_2(y)\exp(\alpha_j), \\ 
    \label{eq;Centred}
    \mu(\alpha_j, y) = \frac{1}{\psi_2^2}(\alpha_j - \psi_1(y)),
\end{gather}
constructed by adding time-varying parameters into the two Langevin drifts considered previously. In particular for the first drift $(\psi_1(y), \psi_2(y))$ are constructed such that they taper between the parameters of two different Gamma distributions on a finite given interval. As such this drift encodes a highly informative prior about the long-term hazard, but a weaker prior to be used for the observation period. The second drift allows the prior mean of the log-hazard to vary with time. In particular this allows for a pre-specified hazard function, for example elicited from previous clinical trials, to be used to guide long-term extrapolations. We note that this combining of the observed hazard with a pre-specified long-term hazard bears a strong resemblance to the blended survival approach of \citep{Che2023}, albeit on the hazard rather than survival function.

A similar consideration can be taken when incorporating covariates directly in the model. Often in these cases, analysts will seek to encode a waning treatment effect assumption into extrapolations \citep{Jackson2017}. This can be done explicitly within our framework as
\begin{equation}
    \label{eq;trtwane}
    \mu(\beta_j, y) = \frac{1}{\psi_2(y)^2}\beta_j,
\end{equation}
shrinking the treatment effect to 0 as $y\to\infty$.

\subsubsection{CLL-8 trial data}

We apply the time-varying drifts to data from the CLL-8 trial \citep{Williams2017}, that investigated the effect of an immunotherapy treatment in combination with chemotherapy on survival in chronic lymphocytic leukemia (CLL) patients, compared to survival in patients who received chemotherapy alone.  810 patients were enrolled with 403 randomised to the treatment group and 407 to the control group, with only 11.5\% of patients dying during the trial. Previous analysis has noted that there is expected to be a notable drop in $S(y)$ after 4 years \citep{Che2023}. In particular we compare a Langevin diffusion prior with a fixed Gamma(10,10) distribution to one that converges to a Gamma(10,10) distribution in the extrapolation period, specified by \eqref{eq;Converge}. We also compare the baseline Gompertz drift to another centred around a given Gompertz hazard function \eqref{eq;Centred}. Generating two chains of 10,000 samples including burn-in took under 2 minutes for each model, except for the Gamma(10,10) model where poor prior specification hindered computation. Full prior specifications, computational details and further results are provided in the supplementary materials \citep{Hardcastle2024supp}.

\begin{figure}[]
    \centering
    \includegraphics[width=\linewidth]{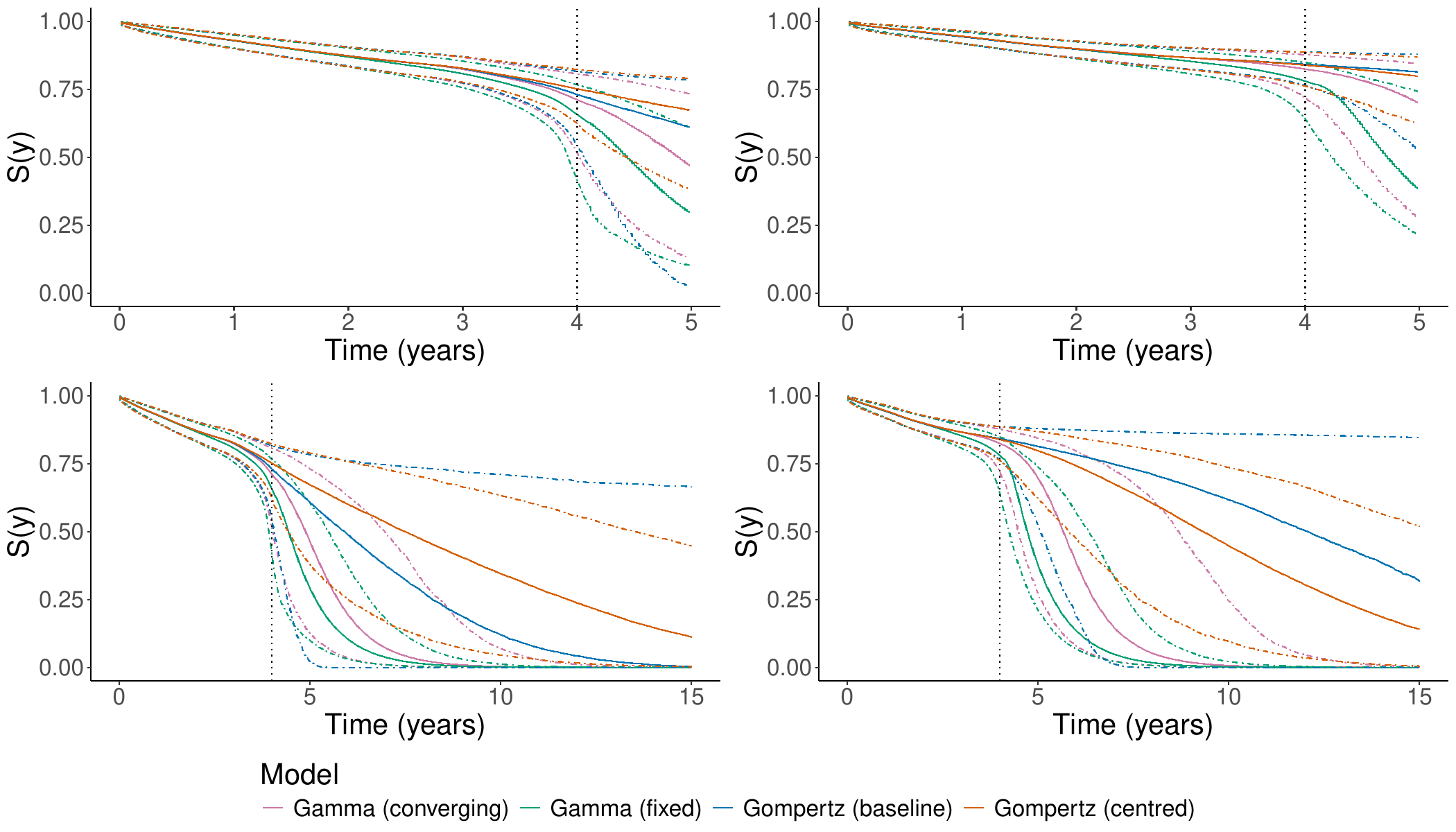}
    \caption{Survival curves for the diffusion piecewise exponential model for varying specifications of $\mu(\alpha_j)$ fit to the control (left) and treatment (right) arms of the CLL-8 trial data. Curves are plotted for the observation period (top) and extrapolation period (bottom), with $y_+ = 4$ denoted by the dotted line. Median values for $S(y)$ are given by the solid lines with 95\% credible intervals indicated by the dashed lines.}
    \label{fig:TA174}
\end{figure}

\begin{table}[]
    \centering
    \begin{tabular}{l|l|l|l}
       Model & Trial arm & $\mathbb{E}[Y]$ on $(0,y_+)$ & $\mathbb{E}[Y]$ on $(0,y_\infty)$\\
       \hline
        Gamma fixed  & Control  & 3.51 (3.36, 3.64) & 4.25 (3.70, 5.03) \\
        Gamma fixed  & Treatment  & 3.66	(3.53, 3.77) & 4.67	(4.14, 5.81) \\
        Gamma waning & Control & 3.55	(3.40, 3.68) & 4.71 (3.83, 6.25) \\
        Gamma waning & Treatment   & 3.70	(3.58, 3.80) & 5.54 (4.37, 7.95) \\
        Gompertz Baseline & Control & 3.56 (3.41, 3.69)	& 6.61 (3.72, 11.30) \\
        Gompertz Baseline & Treatment & 3.70 (3.58, 3.80)	& 9.78 (4.80, 13.11) \\
        Gompertz centred & Control & 3.58 (3.44, 3.71) & 10.75 (8.35, 12.18) \\
        Gompertz centred & Treatment & 3.71 (3.60, 3.82)	& 12.17 (10.65, 13.08) \\
        \hline
    \end{tabular}
    \caption{Estimates for mean survival for the CLL-8 trial in the control and treatment arms when modelled independently under various prior assumptions. Expected mean survival and 95\% credible intervals are reported for the observation period and the entire window of interest.}
    \label{tab:TA174}
\end{table}

Survival curves for the above drifts are provided in Figure \ref{fig:TA174} for both the treatment and control arms along with corresponding posterior estimates of mean survival in Table \ref{tab:TA174}. Expected mean survival is larger under each prior specification. Note that compared to the data in Section \ref{sec;Colon} events are rarer near the point of administrative censoring and therefore $\mu(\alpha_j,y)$ is more influential before $y_+$, as can be observed in Figure \ref{fig:TA174}. This effect is particularly profound for the fixed Gamma(10,10) drift, in contrast \eqref{eq;Converge} allows for the data to remain informative for longer before $\mu(\alpha_j,y)$ becomes influential. As can be seen in both trial arms, the Gompertz baseline prior is highly sensitive to the value of the survival function at $y_+$. As a result, minor differences in the data (as seen between the two trial arms) give rise to very different long-term survival estimates. In contrast, the centred hazard provides a far more controlled method for incorporating informative long-term information.

\begin{figure}
    \centering
    \includegraphics[width=\linewidth]{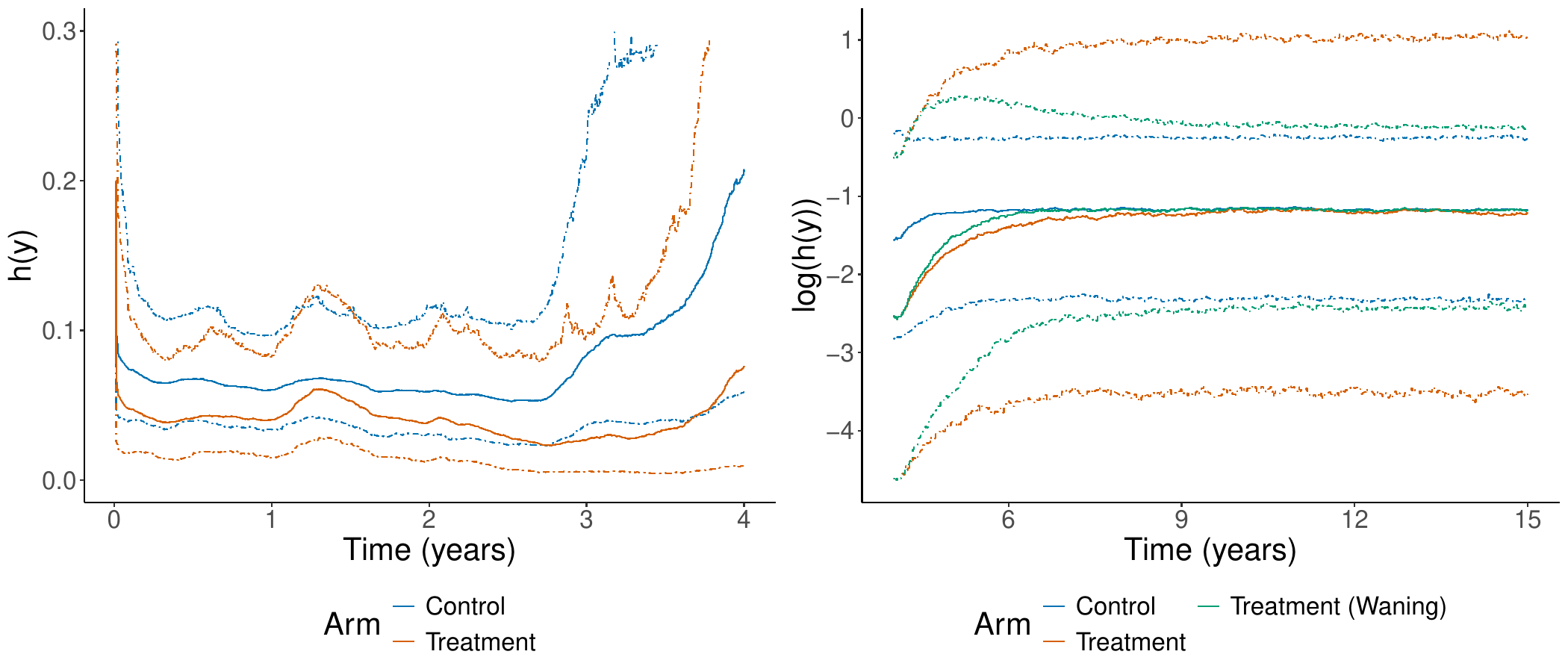}
    \caption{(Left) Hazard functions for the control and treatment arms with corresponding 95\% credible intervals during the observation period. (Right) Log-hazard functions for the control and treatment arms (under both waning and non-waning assumptions) during the extrapolation period.}
    \label{fig:TA174cov}
\end{figure}

\begin{table}[]
    \centering
    \begin{tabular}{l|l|l|l}
       Treat. arm & $\mathbb{E}[Y]$ on $(0,y_+)$ & $\mathbb{E}[Y]$ on $(0,y_\infty)$ & $\mathbb{E}[Y_{\text{t}}] - \mathbb{E}[Y_{\text{c}}]$ \\
       \hline
        Control  & 3.55 (3.41, 3.67) & 5.62 (4.84, 6.60) & --- \\
        Treatment (fixed)  & 3.73 (3.61, 3.82) & 6.34 (4.56, 9.02) & 0.73 (-1.07, 3.17) \\
        Treatment (waning) & 3.73 (3.61, 3.82) & 6.29 (4.90, 7.73) & 0.68 (-0.64, 1.87) \\
        \hline
    \end{tabular}
    \caption{Estimates for mean survival and corresponding 95\% credible intervals for the CLL-8 trial in the control and treatment arms (under both waning and non-waning assumptions) during the observation period, $(0,y_+)$, and the entire window of interest, $(0,y_\infty)$. The final column reports estimates of the difference between mean survival for the treatment and control groups.}
    \label{tab:TA174_cov}
\end{table}

We conclude by investigating the model when covariates are directly incorporated rather than modelled independently. Here we use a $\text{Gamma}(5,15)$ Langevin diffusion for the baseline log-hazard and compare a $\text{Normal}(0,1)$ Langevin drift for $\mu(\beta_{y})$, with \eqref{eq;trtwane} where the waning begins after $y_+$ resulting in identical inferences during the observation period. Hazards for the observation period and log-hazards for the extrapolation period are shown in Figure \ref{fig:TA174cov}, with mean survival estimates provided in Table \ref{tab:TA174_cov}. 

From the hazard functions, there is clear evidence of some non-proportionality in the observation period, and some weak evidence to suggest the treatment is beneficial compared to the control, that is corroborated by mean survival estimates. Examining the extrapolation period, both drifts for the treatment arm imply that the expected hazard should converge to the hazard for the control arm. For the fixed Langevin diffusion, uncertainty then arises from both the process for $\alpha_j$ and $\beta_j$. In contrast, the treatment hazard converges faster to the hazard of the control arm, and the associated credible intervals are far smaller. This is reflected in the estimates of the difference in mean survival where the waning assumption reduces the uncertainty in the estimates of difference in mean survival. We note that treatment effect waning is a strong and untestable assumption that in practice will require expert justification to be incorporated. 

\section{Discussion}
\label{sec;Discussion}
In this work we have introduced the diffusion piecewise exponential model, a novel prior structure combining flexible modelling of the hazard function in the observation period with expert information in the extrapolation period within a principled Bayesian framework.

No model can automatically guarantee plausible extrapolations. The diffusion piecewise exponential model is no exception, with reasonable extrapolation relying on sensible specification of $\mu$. Our approach has key advantages, however, compared to current state-of-the-art methods. First, as demonstrated through the variety of drifts used in Section \ref{sec;Examples}, $\mu$ is able to incorporate a wide-range of prior information, with minimal restrictions on the form this should take. Second, specification of this prior information is only weakly informative during the observation period, becoming increasingly influential as the data become sparse. Finally, the assumptions encoded into this prior are explicit and easy to interrogate. This is a core part of the process of appraising the cost-effectiveness of novel medical interventions, and as such our model promotes improved decision making and analysis by both pharmaceutical companies and regulatory bodies. 

We have presented a wide range of possibilities for the specification of $\mu$, but the examples considered here are by no means exhaustive. In the context of clinical trial data with two treatment arms, for example, dependence between each hazard could be introduced through $\mu$ rather than the local proportional hazard assumption incorporated in this work. We believe the design of drift functions that can capture an even wider range of expert information to be an exciting avenue for future research.

We have assumed a Poisson process prior for $\{s_j\}_{j=1}^J$ with homogeneous intensity. This assumption could be altered to incorporate a process with, for example, decreasing intensity if more volatility is expected at the start of the observation period. A particular strength of the sampling methods developed for this work is that changes to $\mu$ and $\gamma$ in general do not require changes to the sampler. The only weak condition for the prior on $\{s_j\}_{j=1}^J$ is the existence of a dominating process that is simple to sample from, given the thinning procedure outlined in Section \ref{sec;PDMP_knot}. 

We have shown how efficient computational procedures for sampling from posteriors induced by spike and slab priors using PDMPs can be extended to more general transdimensional posteriors. The key feature of this construction was the identification of a hyperplane in $\theta$-space such that the likelihoods of the simpler and more complex models were identical. In the reversible jump literature this is referred to as a centring point \citep{Brooks2003} and is a common feature of many transdimensional posteriors. The sampling framework provided in Section \ref{sec;Sampling} should therefore allow for the extension of sticky PDMP dynamics to a far wider range of transdimensional sampling problems.

\begin{acks}[Acknowledgments]
    LH would like to thank Sebastiano Grazzi and Sam Power for useful discussions.
\end{acks}

\begin{funding}
LH was supported by EPSRC grant EP/W523835/1.
\end{funding}

\begin{supplement}
\stitle{Supplementary materials for Diffusion piecewise exponential models for survival extrapolation using Piecewise Deterministic Monte Carlo}
\sdescription{Additional derivations for Section \ref{sec;Model}; Algorithms for Section \ref{sec;Sampling} and additional computational details. Further modelling details for Section \ref{sec;Examples} and full details of comparator models.}
\end{supplement}

\bibliographystyle{ba}
\bibliography{TheBib}

\end{document}